\newcommand*\mycirc[1]{%
  \begin{tikzpicture}
    \node[draw,circle,inner sep=1pt] {#1};
  \end{tikzpicture}}
\newtheorem{theorem}{Theorem}[section]
\newtheorem{definition}{Definition}[section]
\newtheorem{lemma}{Lemma}[section]
\newtheorem{proposition}{Proposition}[section]
\newtheorem{remark}{Remark}[section]
\newproof{proof}{Proof}
\journal{arXiv}
\begin{document}

\begin{frontmatter}



\title{
Stability Analysis of Large-Scale Distributed Networked Control Systems with Random Communication Delays: A Switched System Approach
}


\author{Kooktae Lee and Raktim Bhattacharya
\address{Kooktae Lee and Raktim Bhattacharya are with the Department of Aerospace Engineering, Texas A\&M University,
        College Station, TX 77843-3141, USA, {\tt\scriptsize \{animodor,raktim\}@tamu.edu.}
        This research was supported by NSF award \#1349100 with Dr. Almadena Y. Chtchelkanova as the Program Manager.}
}

\begin{abstract}
In this paper, we consider the stability analysis of large-scale distributed networked control systems with random communication delays between linearly interconnected subsystems. The stability analysis is performed in the Markov jump linear system framework. There have been considerable researches on stability analysis of Markov jump systems, however, these methods are not applicable to large-scale systems because large numbers of subsystems result in an extremely large number of the switching modes. To avoid this scalability issue, we propose a new reduced mode model for stability analysis, which is computationally efficient. We also consider the case in which the transition probabilities for the Markov jump process contain uncertainties. We provide a new method that estimates bounds for uncertain Markov transition probability matrix to guarantee the system stability. The efficiency and the usefulness of the proposed methods are verified through examples.
\end{abstract}

\begin{keyword}
Large-scale distributed networked control system, Markov jump linear system, switched system, random communication delay


\end{keyword}

\end{frontmatter}



\section{Introduction} 
A networked control system (NCS) is a system that is controlled over a communication network. Recently, NCSs have attracted considerable research interests due to emerging distributed control applications. For example, the NCSs are broadly used in applications including traffic monitoring, networked autonomous mobile agents, chemical plants, sensor networks and distributed software systems in cloud
computing architectures. Due to the communication network between subsystems, communication delays or communication losses may occur, resulting in performance degradation or even instability. 
Therefore, it has led various researches to analyze the NCSs with communication delays \cite{cheng1988distributed}, \cite{walsh2002stability}, \cite{yook2002trading}, \cite{yang2006h}, \cite{nilsson1998stochastic}, \cite{xiao2000control}, \cite{lee2015performance}, \cite{lee2014acc}. In particular, \cite{xiao2000control} constructed a switched system structure for the analysis of NCS by including  actuators, sensors, and the plant as a single system.  

In this paper, we study distributed networked control systems (DNCS) with a large number of spatially distributed linear subsystems (or agents). For such large-scale systems, our primary goal is to analyze system stability when \textit{random communication delays} exist between subsystems.
Typically, such delays have been modeled as \textit{Markov jump linear system} (MJLS) \cite{xiao2000control}, \cite{shi1999control}, \cite{seiler2001analysis}, \cite{zhang2005new}, \cite{shi2009output}, \cite{liu2009stabilization}, in which switching sequence is governed by a Markovian process. Therefore, stability analysis in the existence of communication delays has been performed in the MJLS framework \cite{feng1992stochastic}, \cite{do2006discrete}, \cite{zhang2009stability}. However, these results are applicable to the systems with a small number of switching modes \cite{xiao2000control}, \cite{liu2009stabilization}, \cite{shi2009output}, \cite{seiler2001analysis}, whereas the large-scale DNCSs in which we are particularly interested give rise to an extremely large number of switching modes. 
For such systems, previous approaches for stability analysis are computationally intractable. 
Although \cite{lee2015async} investigated the massively parallel asynchronous numerical algorithm by employing the switched linear system framework  that circumvents the scalability issue with respect to the large number of the switching modes, it is developed for the independent and identically distributed (i.i.d.) switching.
In addition, we are also interested in systems where the transition probabilities are inaccurately known as in \cite{zhang2009stability}, \cite{zhang2010necessary}, \cite{karan2006transition} because, in practice, it is difficult to accurately estimate the Markov transition probability matrix that models the random communication delays.

This paper provides two key contributions to analyze the stability of the large-scale DNCS with random communication delays. Firstly, we guarantee the mean square stability of such systems by introducing a reduced mode model. We prove that the mean square stability for individual switched system implies a necessary and sufficient stability condition for the entire DNCS. This drastically reduces the number of modes necessary for analysis. Secondly, we present a new method to estimate the bound for uncertain Markov transition probability matrix for which stability is guaranteed. These results enable us to analyze large-scale systems in a computationally tractable manner.

Rest of this paper is organized as follows. We introduce the problem for the large-scale DNCS in section 2.
Section 3 presents the switched system framework for the stability analysis with communication delays. In Section 4, we propose the reduced mode model to efficiently analyze stability. Section 5 quantifies the stability region and bound for uncertain Markov transition probability matrix. This is followed by the application of the proposed method to an example system in section 6, and we conclude the paper with section 7.

\textbf{Notation:} The set of real number is denoted by $\mathbb{R}$. The symbols $\parallel\cdot\parallel$ and $\parallel\cdot\parallel_{\infty}$ stand for the Euclidean and infinity norm, respectively. Moreover, the symbol $\#(\cdot)$ denotes the cardinality -- the total number of elements in the given set. Finally, the symbols $\text{tr}(\cdot)$, $\rho(\cdot)$, $\otimes$, and $\textnormal{diag}(\cdot)$ represent trace operator, spectral radius, Kronecker product, and block \textnormal{diag}onal matrix operator, respectively.

\section{Problem Formulation}
\subsection{Distributed networked control system with no delays}
Consider a DNCS with discrete-time dynamics, given by:
\begin{equation}
x_i(k+1) = \sum_{j\in\mathcal{N}_i}A_{ij}x_j(k),\quad i=1,2,\hdots,N,\label{eqn:1}
\end{equation}
where $k$ is a discrete-time index, $N$ is the total number of agents (subsystems), $x_i\in \mathbb{R}^n$ is a state for the $i^{th}$ agent, $\mathcal{N}_i$ is a set of neighbors for $x_i$ including the agent $x_i$ itself, and $A_{ij}\in\mathbb{R}^{n\times n}$ is a time-invariant system matrix that represents the linear interconnections between agents. Note that we have $A_{ij} = 0$ if there is no interconnection between the agents $i$ and $j$. 

To represent the entire systems dynamics, we define the state $x(k)\in\mathbb{R}^{Nn\times Nn}$ as $x(k)\triangleq [x_1(k)^{\top},x_2(k)^{\top},\hdots,x_N(k)^{\top}]^{\top}$.
Then, the system dynamics of the DNCS is given as
\begin{align}
x(k+1) = \mathcal{A}x(k),\label{eqn:2}
\end{align}
where the matrix $\mathcal{A}\in\mathbb{R}^{Nn\times Nn}$ is defined by
\begin{align*}
&\quad \mathcal{A} \triangleq \begin{bmatrix}
A_{11} & A_{12} & A_{13} & \cdots & A_{1N}\\
A_{21} & A_{22} & A_{23} & \cdots & A_{2N}\\
A_{31} & A_{32} & A_{33} & \cdots & A_{3N}\\
\vdots& \vdots & \vdots & \ddots &\vdots\\
A_{N1} & A_{N2} & A_{N3} & \cdots & A_{NN}
\end{bmatrix}, \\
A_{ij} &=
\begin{dcases}
0, \text{ if no connection between the agents $i$ and $j$,}\\
A_{ij}, \text{ otherwise.}
\end{dcases}
\end{align*}

For the discrete-time system in \eqref{eqn:2}, it is well known that the system is stable if and only if the condition $\rho(\mathcal{A}) < 1$ is satisfied. We assume that the system in \eqref{eqn:2}, which is the case without communication delays is stable throughout the paper. Then, we address the problem to analyze the stability in the presence of \textit{random communication delays}. We remind the reader that $N$ is very large.

\subsection{DNCS with communication delays}
Often, network communication between agents encounter time delays or packet losses while sending and receiving data as described in Fig. \ref{fig.1}. We denote the symbol $\tau$ as communication delays and assume that $\tau$ has a discrete value bounded by  $0\leq \tau \leq \tau_d < \infty$, where $\tau_d$ is a finite-valued maximum delay. Then, the dynamics for the agent $i$  with communication delays can be expressed as:
\begin{figure}
\centering
\includegraphics[scale=0.6]{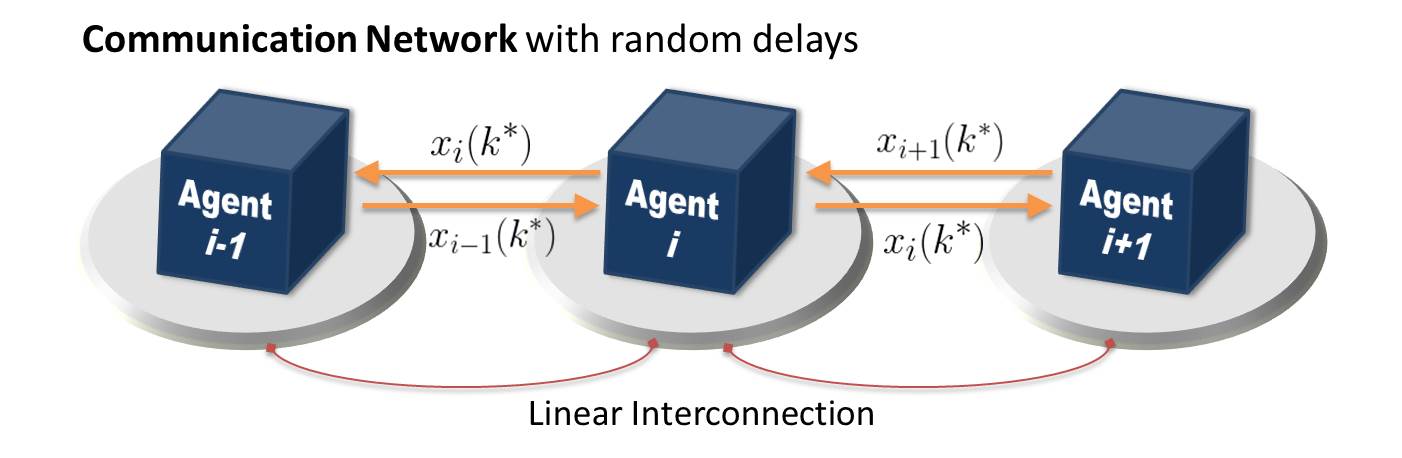}
\caption{The schematic of the large-scale distributed networked control system. The communication delay is represented by $k^{*}\triangleq k - \tau$, where $\tau$ is the random communication delay term.}\label{fig.1}
\end{figure}
\begin{align}
&x_i(k+1) = \sum_{j\in\mathcal{N}_i}A_{ij}x_j(k^*),\quad i=1,2,\hdots,N,\label{eqn:3}
\end{align}
where $k^* \triangleq k-\tau$.
Note that we have no communication delays when $i=j$ since there is no communication in this case.

The communication delay, modeled as a stochastic process, is represented by the term $k^*$. To analyze the stability of the DNCS, we define an augmented state $X(k)$ as $X(k) \triangleq [x(k)^{\top},\:x(k-1)^{\top},\:\cdots,\: x(k-\tau_d)^{\top}]^{\top}\in\mathbb{R}^{Nnq \times Nnq}$, where $q\triangleq \tau_d+1$. Then, the dynamics for the entire system is given by
\begin{align}
X(k+1) = W(k)X(k),\label{eqn:4}
\end{align}
where 
$W(k) \triangleq \begin{bmatrix}
\tilde{A}_1(k) & \tilde{A}_2(k) & \cdots & \tilde{A}_{q-1}(k) & \tilde{A}_q(k)\\
I & 0 & \cdots & 0 & 0\\
0 & I & \cdots & 0 & 0\\
\vdots & \vdots & \ddots & \vdots &\vdots \\
0 & 0 & \cdots & I & 0\\
\end{bmatrix}\in \mathbb{R}^{Nnq\times Nnq},$\\
the matrix $I$ denotes an identity matrix with proper dimensions, and the time-varying matrices $\tilde{A}_j(k)\in\mathbb{R}^{Nn\times Nn}$, $j=1,2,\hdots,q$, model the randomness in the communication delays between neighboring agents.

\section{Switched System Approach}
Without loss of generality, the dynamics of the large-scale DNCS with communication delays in \eqref{eqn:4} can be transformed into a switched system framework as :
\begin{equation}
x(k+1) = W_{\sigma(k)}x(k), \quad \sigma(k)\in\{1,2,\cdots,m\},\label{eqn:5}
\end{equation}
where the set of matrices $\{W_{\sigma(k)}\}_{\sigma(k)=1}^{m}$ represents all possible communications delays between interconnected agents, $\{\sigma(k)\}$ is the switching sequence, and $m$ is the total number of switching modes. 
When the switching sequence $\{\sigma(k)\}$ is stochastic, \eqref{eqn:5} is referred to as a stochastic switched linear system or a stochastic jump linear system \cite{lee2015performance}.
For the stochastic switched linear system, the switching sequence $\{\sigma(k)\}$ is governed by the mode-occupation switching probability $\pi(k)=[\pi_1(k),\pi_2(k),\hdots,$ $\pi_m(k)]$, where $\pi_i$ is a fraction number, representing the modal probability such that $ \sum_{i=1}^{m}\pi_i=1$ and $0 \leq \pi_i \leq 1$, $\forall i$. 
Typically, randomness in communication delays or communication losses has been modeled by the MJLS framework \cite{shi1999control}, \cite{seiler2001analysis}, \cite{zhang2005new}, \cite{shi2009output}, \cite{liu2009stabilization}.
Therefore, we make the following assumption in our analysis.
\begin{itemize}
\item Assumption: Consider the stochastic jump linear system \eqref{eqn:5} with the switching probability $\pi(k)=[\pi_1(k),\pi_2(k),\hdots,\pi_m(k)]$. Then, $\pi(k)$ is updated by the Markovian process given by $\pi(k+1) = \pi(k)P$, where $P\in\mathbb{R}^{m\times m}$ is the Markov transition probability matrix.
\end{itemize}
Since the MJLS is a family of the stochastic switched linear system, various stability notions can be defined \cite{feng1992stochastic}. In this paper, we will consider the mean square stability condition, defined below.

\begin{definition}(Definition 1.1 in \cite{fang2002stabilization})
The MJLS is said to be mean square stable if for any initial condition $x_0$ and arbitrary initial probability distribution $\pi(0)$, $\displaystyle\lim_{k\rightarrow\infty}\mathbb{E}\left[||x(k,x_0)||^2\right]=0$.
\end{definition}

Note that for the large-scale DNCS, the total number of switching modes $m$ depends on the size $q$ and $N$. 
Since the communication delays take place independently while receiving and sending the data for each agent, $m$ is calculated by counting all possible scenarios to distribute every matrices $A_{ij}\in\mathbb{R}^{n\times n}$ for $i\neq j$ in the block matrix $\mathcal{A}\in\mathbb{R}^{Nn\times Nn}$ given in \eqref{eqn:2}, into each $\tilde{A}_j(k)\in\mathbb{R}^{Nn\times Nn}$, $j=1,2,\hdots,q,$ given in \eqref{eqn:4}, which  results in $m=q^{N(N-1)}$. For large $N$, $m$ is quite large, which makes current analysis tools for the MJLS computationally intractable.

Before we further proceed, we introduce the following proposition that was developed for the stability analysis of the MJLS.

\begin{proposition}\label{prop:4.1}(Theorem 1 in \cite{costa1993stability})
The MJLS with the Markov transition probability matrix $P$ is mean square stable if and only if
\begin{eqnarray}
\rho\left(\left(P^{\top}\otimes I\right)\textnormal{diag}(W_j\otimes W_j) \right) < 1,\label{eqn:10}
\end{eqnarray}
where $I$ is an identity matrix with a proper dimension,
{\small
\begin{align*}
\textnormal{diag}(W_j\otimes W_j) \triangleq \begin{bmatrix}
(W_1\otimes W_1) & 0 & 0 & \cdots & 0\\
0 & (W_2\otimes W_2) & 0 & \cdots & 0\\
\vdots &  & \ddots &&\vdots\\
 0 & 0 &  & (W_{m\text{-}1}\otimes W_{m\text{-}1}) & 0\\
0 & 0 & \hdots & 0 & (W_m\otimes W_m)
\end{bmatrix},
\end{align*}}
and $m$ is the total number of the switching modes.
\end{proposition}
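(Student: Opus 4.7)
The plan is to translate mean square stability into the decay of a deterministic linear iteration on the mode-conditioned second moments, and then to identify the iteration matrix with the one in \eqref{eqn:10}. For each mode $i\in\{1,\ldots,m\}$ I would introduce
\begin{equation*}
Q_i(k) \triangleq \mathbb{E}\!\left[x(k)x(k)^{\top}\mathbf{1}_{\{\sigma(k)=i\}}\right],
\end{equation*}
noting that each $Q_i(k)$ is positive semidefinite and that $\mathbb{E}[\|x(k)\|^{2}]=\sum_{i=1}^{m}\text{tr}(Q_i(k))$, so mean square stability is equivalent to $Q_i(k)\to 0$ for every $i$.

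Next, using $x(k+1)=W_{\sigma(k)}x(k)$, the Markov transition $\Pr(\sigma(k+1)=j\mid\sigma(k)=i)=p_{ij}$, and the fact that $\sigma(k+1)$ is conditionally independent of $x(k)$ given $\sigma(k)$, a direct conditioning on the joint event $\{\sigma(k)=i,\sigma(k+1)=j\}$ yields
\begin{equation*}
Q_j(k+1) = \sum_{i=1}^{m} p_{ij}\, W_i\, Q_i(k)\, W_i^{\top}.
\end{equation*}
Applying $\text{vec}(AXB)=(B^{\top}\otimes A)\,\text{vec}(X)$ termwise and stacking $q(k)\triangleq[\,\text{vec}(Q_1(k))^{\top},\ldots,\text{vec}(Q_m(k))^{\top}\,]^{\top}$ gives $q(k+1)=\mathcal{L}\,q(k)$, where the $(j,i)$ block of $\mathcal{L}$ equals $p_{ij}(W_i\otimes W_i)$. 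Since the $(j,i)$ block of $P^{\top}\otimes I$ is $p_{ij}I$, right-multiplying by $\textnormal{diag}(W_j\otimes W_j)$ reproduces $\mathcal{L}$ blockwise, so $\mathcal{L}=(P^{\top}\otimes I)\,\textnormal{diag}(W_j\otimes W_j)$.

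Finally I would close the equivalence MSS $\Longleftrightarrow$ $q(k)\to 0$ for every admissible $q(0)$ $\Longleftrightarrow$ $\rho(\mathcal{L})<1$. The ``$\Leftarrow$'' direction is immediate from $q(k)=\mathcal{L}^{k} q(0)\to 0$ together with the trace identity above. For ``$\Rightarrow$'', I would rely on linearity of the iteration: by concentrating $\pi(0)$ on a single mode and varying $x_0$, each block $Q_i(0)$ sweeps the rank-one PSD cone, whose real-linear span (by polarisation, and if needed by passage to the complex Hermitian extension) equals the full invariant subspace on which the spectral radius of $\mathcal{L}$ is realised; if $\rho(\mathcal{L})\geq 1$ one could therefore exhibit an admissible initial condition that fails to decay.

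The main obstacle I anticipate is not any single analytic difficulty but the index bookkeeping in the middle step: $P$ versus $P^{\top}$ and the placement of the Kronecker factors must be aligned with the block-diagonal convention in \eqref{eqn:10} so that the $(j,i)$ block of $\mathcal{L}$ comes out as $p_{ij}(W_i\otimes W_i)$ and not $p_{ji}(W_j\otimes W_j)$. The density/genericity argument in the last paragraph is standard but is the only place where a non-computational remark is actually needed; everything else reduces to the vectorisation identity and the Markov property.
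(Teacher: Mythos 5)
Your proposal is correct and follows essentially the same route as the paper: the paper states this proposition as a citation of Costa--Fragoso, but the identical argument --- mode-conditioned second moments $Q_i(k)$, the recursion $Q_j(k+1)=\sum_i p_{ij}W_iQ_iW_i^{\top}$, vectorisation, and stacking to identify the iteration matrix $(P^{\top}\otimes I)\,\textnormal{diag}(W_j\otimes W_j)$ --- is exactly what appears in the paper's proof of Theorem \ref{theorem:4.1}. If anything, your treatment of the necessity direction (spanning the PSD cone via rank-one initial conditions) is more careful than the paper's, which simply asserts divergence when the spectral radius exceeds one.
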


For the given set of matrices $\{W_{\sigma(k)}\}_{\sigma(k)=1}^{m}$ and the transition probability matrix $P$, one can always compute the spectral radius given in \eqref{eqn:10}, and hence guarantee the system stability.

Unfortunately, this condition is not applicable to large-scale DNCSs since $N$ is very high and results in extremely large $m$. 
For example, even if $q=2$ and $N=100$, we have $m = 2^{100\times 99}$. It is not possible to compute the spectral radius for such problems. To circumvent this scalability issue, we present next a new analysis approach for such large-scale DNCSs.

\section{Stability with Reduced Mode Dynamics}
In this section, we define a new augmented state to reduce the mode numbers as follows:
\begin{align*}
\hat{x}_i(k) \triangleq [\tilde{x}_i(k)^{\top}, \: \tilde{x}_i(k-1)^{\top}, \: \cdots, \: \tilde{x}_i(k-\tau_d)^{\top}]^{\top}\in\mathbb{R}^{\hat{n}_in q},
\end{align*}
where $\tilde{x}_i(k) \triangleq [x_i(k)^{\top}, \:\: x_j(k)^{\top}]^{\top}\in\mathbb{R}^{\hat{n}_in}$, $\hat{n}_i\triangleq\#(\mathcal{N}_i)$, and $x_j(k)\in\mathbb{R}^n$ with $j\in\mathcal{N}_i$ denotes all states that are neighbor to $x_i(k)\in\mathbb{R}^n$.

Then, we can construct a switched linear system framework similarly to \eqref{eqn:5} as follows:
\begin{eqnarray}
\hat{x}_i(k+1) = \hat{W}^i_{\sigma_i(k)}\hat{x}_i(k),\quad \sigma_i(k)\in\{1,2,\hdots,m_i\},\label{eqn:12}
\end{eqnarray}
where 
$\hat{W}^i_{\sigma_i(k)} \triangleq \begin{bmatrix}
\hat{A}_{1}(k) & \hat{A}_{2}(k) & \cdots & \hat{A}_{q-1}(k) & \hat{A}_{q}(k)\\
I & 0 & \cdots & 0 & 0\\
0 & I & \cdots & 0 & 0\\
\vdots & \vdots & \ddots & \vdots & \vdots\\
0 & 0 & \cdots & I & 0
\end{bmatrix} \in \mathbb{R}^{\hat{n}_in q\times \hat{n}_in q}$\\
with the time-varying matrix $\hat{A}_j(k)\in\mathbb{R}^{\hat{n}_in\times \hat{n}_in}$, $j=1,2,\hdots,q$. In this case, the total number of the switching modes for \eqref{eqn:12} is given by 
$m_i=q^{\hat{n}_i(\hat{n}_i-1)}$.

By implementing the reduce mode model given in \eqref{eqn:12}, we will provide a computationally efficient tool for the stability analysis of the original DNCS in the following theorem.

\begin{theorem}\label{theorem:4.1}
Consider the large-scale DNCS \eqref{eqn:5} with Markovian communication delays accompanied by the transition probability matrix $P$. The necessary and sufficient condition for the mean square stability of this system is then given by
\begin{eqnarray}
\rho\Big(({P^i}^{\top}\otimes I)\text{\textnormal{diag}}(\hat{W}^i_j\otimes \hat{W}^i_j)\Big) < 1, \quad\forall i=1,2,\hdots,N,\label{eqn:13}
\end{eqnarray}
where $P^i\in\mathbb{R}^{m_i\times m_i}$ is the transition probability matrix for the reduced mode MJLS given in \eqref{eqn:12}, $I$ is an identity matrix with a proper dimension, $N$ is the total number of the agents in the system, $m_i=q^{\hat{n}_i(\hat{n}_i-1)}$ is the total mode numbers for the reduce mode MJLS, and
{\footnotesize
\begin{align*}
\textnormal{diag}(\hat{W}^i_j\otimes \hat{W}^i_j) \triangleq \begin{bmatrix}
(\hat{W}^i_1\otimes \hat{W}^i_1) & 0 & 0 & \cdots & 0\\
0 & (\hat{W}^i_2\otimes \hat{W}^i_2) & 0 & \cdots & 0\\
\vdots &  & \ddots &&\vdots\\
 0 & 0 &  & (\hat{W}^i_{m_i\text{-}1}\otimes \hat{W}^i_{m_i\text{-}1}) & 0\\
0 & 0 & \hdots & 0 & (\hat{W}^i_{m_i}\otimes \hat{W}^i_{m_i})
\end{bmatrix}.
\end{align*}}
\end{theorem}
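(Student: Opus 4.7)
The overall strategy is to apply Proposition \ref{prop:4.1} to each of the $N$ reduced MJLSs in \eqref{eqn:12} independently, translating each spectral-radius condition in \eqref{eqn:13} into the mean-square decay $\mathbb{E}[\|\hat{x}_i(k)\|^2] \to 0$, and then bridging this family of local decays with the global decay $\mathbb{E}[\|X(k)\|^2] \to 0$ using the observation that every entry of $X(k)$ appears, up to a delay shift, as an entry of at least one $\hat{x}_i(k)$. I would split the argument into sufficiency and necessity accordingly.

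For sufficiency, I would assume \eqref{eqn:13} holds for every $i$. Proposition \ref{prop:4.1} then yields $\lim_{k\to\infty}\mathbb{E}[\|\hat{x}_i(k)\|^2] = 0$ for each $i$. Since $x_i(k)$ sits as a sub-vector of $\hat{x}_i(k)$, this gives $\lim_{k\to\infty}\mathbb{E}[\|x_i(k)\|^2] = 0$ for every $i$, so summing over the $N$ agents and over the delay window of length $q$ produces $\lim_{k\to\infty}\mathbb{E}[\|X(k)\|^2] = 0$, which is precisely mean square stability of \eqref{eqn:5}. For necessity, I would reverse the flow: $\mathbb{E}[\|X(k)\|^2] \to 0$ immediately implies $\mathbb{E}[\|\hat{x}_i(k)\|^2] \to 0$ because $\hat{x}_i$ is a coordinate projection of $X$; the marginals of the Markov process supply each $P^i$; and Proposition \ref{prop:4.1}, applied in reverse to the reduced MJLS, delivers \eqref{eqn:13}.

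The principal obstacle I anticipate is justifying that the reduced dynamics in \eqref{eqn:12}, viewed as an autonomous MJLS with matrices $\hat{W}^i_{\sigma_i(k)}$ and transition matrix $P^i$, genuinely models the trajectory of $\hat{x}_i(k)$ within the full DNCS. The issue is that a neighbor $x_j$ with $j \in \mathcal{N}_i$ evolves according to its own neighbor set $\mathcal{N}_j$, which in general extends beyond $\mathcal{N}_i$, so the local block does not close in the strict sense. I would handle this by exploiting the independence of the communication delays across distinct communication links: under this independence, $P$ factorizes across the agents' in-links and each $P^i$ is recovered as the correct marginal, and the only row of $\mathcal{A}$ relevant for the evolution of $x_i$ itself is the $i$-th one, which is fully contained inside the $\hat{W}^i$ block. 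Making this factorization precise, and verifying that the Kronecker-structured second-moment operator of Proposition \ref{prop:4.1} applied to the full system decouples, in spectral radius, into the $N$ local second-moment operators appearing in \eqref{eqn:13}, is the technical heart of the argument.
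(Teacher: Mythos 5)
Your overall route coincides with the paper's: establish mean square stability of each reduced-mode system \eqref{eqn:12} via the Costa-type spectral-radius test, then pass between $\lim_{k}\mathbb{E}[\|\hat{x}_i(k)\|^2]=0$ for all $i$ and $\lim_{k}\mathbb{E}[\|x(k)\|^2]=0$ by noting that every component of the full state is a component of some $\hat{x}_i$ and conversely. The only cosmetic difference is that the paper does not cite Proposition \ref{prop:4.1} as a black box; it re-derives it for \eqref{eqn:12} by forming $Q_s^i(k)=\mathbb{E}[\hat{x}_i(k)\hat{x}_i(k)^{\top}\mid\sigma_i(k)=s]\,\pi_s^i(k)$, obtaining the recursion $Q_s^i(k)=\sum_{r}p_{rs}^i\hat{W}_r^i Q_r^i(k-1)\hat{W}_r^{i\,\top}$, and vectorizing so that the stacked system matrix is exactly $({P^i}^{\top}\otimes I)\,\mathrm{diag}(\hat{W}_j^i\otimes\hat{W}_j^i)$; that is equivalent to invoking the proposition directly.

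The gap is precisely the one you flag and then defer: the verification that $\hat{x}_i$ actually evolves as the autonomous MJLS \eqref{eqn:12} inside the full DNCS. The concern is genuine, since for $j\in\mathcal{N}_i$ the update of $x_j$ involves $x_l$ with $l\in\mathcal{N}_j\setminus\mathcal{N}_i$, so the local block need not close. But the repair you sketch does not repair it: independence of delays across links and marginalization of $P$ give you the correct mode chain $P^i$, which is a statement about the switching signal, not about the state; it does nothing to make $x_j(k+1)$ expressible in terms of components of $\hat{x}_i(k)$ alone. Likewise, the observation that the $i$-th row of $\mathcal{A}$ lies inside the $\hat{W}^i$ block covers only $x_i$ itself, whereas the second moment $\mathbb{E}[\hat{x}_i\hat{x}_i^{\top}]$ whose decay you need also contains the blocks $\mathbb{E}[x_jx_j^{\top}]$ for the neighbors. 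To be fair, the paper's own proof does not confront this either -- it takes \eqref{eqn:12} as the defining model of the reduced subsystem and computes from there -- so your attempt is no less complete than the published argument; but as written, your proposal identifies the technical heart and then leaves it unproved, and the factorization argument you outline would not, by itself, supply it. A complete argument would need either a closure hypothesis on the neighbor sets (satisfied, e.g., in the chain topology of the example) or a redefinition of $\hat{W}^i$ that treats the neighbor rows exactly.
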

\begin{proof}
Let the matrix $Q^{i}(k)$ be of the form $Q^{i}(k)\triangleq \mathbb{E}[\hat{x}_i(k)\hat{x}_i(k)^{\top}]$. Then, $Q^i(k)$ is alternatively obtained by the following equation: $\displaystyle Q^{i}(k) = \sum_{s=1}^{m_i}Q_{s}^{i}(k)$, where $\displaystyle Q_{s}^{i}(k) \triangleq \mathbb{E}\left[\hat{x}_i(k){\hat{x}_i(k)}^{\top}|\sigma_i(k)=s\right]\pi_s^{i}(k)$, and $\pi_{s}^{i}(k) \triangleq \mathbf{Pr}\big(\sigma_i(k)=s\big)$. Then, $Q_s^i(k)$ satisfies
{\small
\begin{align*}
Q_s^i(k) &= \sum_{r=1}^{m_i}\mathbb{E}[\hat{x}_i(k){\hat{x}_i(k)}^{\top}\mid \sigma_i(k) =s,\sigma_i(k-1) =r]\\[-0.1in]
&\qquad\qquad\qquad\qquad\qquad\qquad\mathbf{Pr}(\sigma_i(k-1) = r\mid \sigma_i(k) =s)\pi_s^i(k)\\
&= \sum_{r=1}^{m_i}\mathbb{E}[\hat{x}_i(k){\hat{x}_i(k)}^{\top}\mid \sigma_i(k) =s,\: \sigma_i(k-1) =r]\\[-0.1in]
&\qquad\qquad\qquad\qquad\qquad\qquad\underbrace{\mathbf{Pr}(\sigma_i(k) = s\mid \sigma_i(k-1) =r)}_{\triangleq p^{i}_{rs}}\pi_r^i(k-1)\\
&= \sum_{r=1}^{m_i}{p}_{rs}^{i}
\:\mathbb{E}[\hat{x}_i(k){\hat{x}_i(k)}^{\top}\mid \sigma_i(k) =s, \sigma_i(k-1) =r]\pi_r^i(k-1)\\
&= \sum_{r=1}^{m_i}{p}_{rs}^{i}\:\mathbb{E}[\hat{W}^i_{\sigma_i(k-1)}\hat{x}_i(k-1){\hat{x}_i(k-1)}^{\top}\hat{W}^{i^{\top}}_{\sigma_i(k-1)} \mid \sigma_i(k-1) =r]\pi_r^i(k-1)\\
&= \sum_{r=1}^{m_i}{p}_{rs}^{i}\:\hat{W}^i_{r}\underbrace{\mathbb{E}[\hat{x}_i(k-1){\hat{x}_i(k-1)}^{\top}\mid \sigma_i(k-1) =r]\pi_r^i(k-1)}_{=Q_r^i(k-1)}\hat{W}^{i^{\top}}_{r}\\
&= \sum_{r=1}^{m_i}{p}_{rs}^{i}\:\hat{W}^i_{r}Q_r^i(k-1){\hat{W}^{i^{\top}}_{r}}.
\end{align*}}
In the second equality of above equation, $p_{rs}^i$ denotes the mode transition probability from $r$ to $s$ in the Markov transition probability matrix $P^i$.

Taking the vectorization in above equation results in
\begin{align*}
\text{vec}\left(Q_s^i(k)\right) &=  \text{vec}\left(\sum_{r=1}^{m_i}{p}_{rs}^{i}\:\hat{W}^i_{r}Q_r^i(k-1){\hat{W}^{i^{\top}}_{r}}\right)\\
&= \sum_{r=1}^{m_i}{p}_{rs}^{i}\text{vec}\left(\hat{W}^i_{r}Q_r^i(k-1){\hat{W}^{i^{\top}}_{r}}\right)\\
&= \sum_{r=1}^{m_i}{p}_{rs}^{i}(\hat{W}_r^i\otimes \hat{W}_r^i)\text{vec}(Q_r^i(k-1)).
\end{align*}
In the last equality, we used the property that $\text{vec}(ABC) = (C^{\top}\otimes A)\text{vec}(B)$. We define a new variable $y_{(\cdot)}^i(k) \triangleq \text{vec}\left(Q_{(\cdot)}^i(k)\right)$, which leads to
\begin{align*}
&y_s^i(k) = \sum_{r=1}^{m_i}{p}_{rs}^{i}(\hat{W}^i_{r}\otimes \hat{W}^i_{r})y_r^i(k-1).
\end{align*}


By stacking $y_{(\cdot)}^i(k)$ from $1$ up to $m_i$, with a new definition for the augmented state $\hat{y}^i(k) \triangleq [{y_1^i}(k)^{\top}\:{y_2^i}(k)^{\top}\:\hdots\:{y_{m_i}^i}(k)^{\top}]^{\top}$, we have the following recursion equation:
{\footnotesize
\begin{align*}
\hat{y}^i(k)&=\underbrace{\begin{bmatrix}
{p}_{11}^i(\hat{W}^i_{1}\otimes \hat{W}^i_{1}) & {p}_{21}^i(\hat{W}^i_{2}\otimes \hat{W}^i_{2}) & \hdots & {p}_{m_i1}^i(\hat{W}^i_{m_i}\otimes \hat{W}^i_{m_i})\\
{p}_{12}^i(\hat{W}^i_{1}\otimes \hat{W}^i_{1}) & {p}_{22}^i(\hat{W}^i_{2}\otimes \hat{W}^i_{2}) & \hdots & {p}_{m_i2}^i(\hat{W}^i_{m_i}\otimes \hat{W}^i_{m_i})\\
\vdots & \vdots & \ddots & \vdots\\
{p}_{1m_i}^i(\hat{W}^i_{1}\otimes \hat{W}^i_{1}) & {p}_{2m_i}^i(\hat{W}^i_{2}\otimes \hat{W}^i_{2}) & \hdots & {p}_{m_im_i}^i(\hat{W}^i_{m_i}\otimes \hat{W}^i_{m_i})\\
\end{bmatrix}}_{
=({P^i}^{\top}\otimes I)\text{\textnormal{diag}}(\hat{W}^i_j\otimes \hat{W}^i_j)}
\underbrace{
\begin{bmatrix}
y_1^i(k-1)\\
y_2^i(k-1)\\
\vdots\\
y_{m_i}^i(k-1)
\end{bmatrix}}_{=\hat{y}^i(k-1)}.
\end{align*}}

From the above equation, it is clear that $\rho\Big(({P^i}^{\top}\otimes I)\text{\textnormal{diag}}(\hat{W}^i_j\otimes \hat{W}^i_j)\Big) < 1$ implies $\displaystyle \lim_{k\rightarrow\infty}\hat{y}^i(k)=0$, and hence this leads to $\displaystyle\lim_{k\rightarrow\infty}Q^i(k) = 0 \Longleftrightarrow \lim_{k\rightarrow\infty}\text{tr}\left(Q^i(k)\right)=0 \Longleftrightarrow \lim_{k\rightarrow\infty}\mathbb{E}\left[||\hat{x}_i(k)||^2\right]=0$, which is the sufficient mean square stability condition for $\hat{x}_i(k)$. On the other hand, if we have $\rho\Big(({P^i}^{\top}\otimes I)\text{\textnormal{diag}}(\hat{W}^i_j\otimes \hat{W}^i_j)\Big) > 1$, then $\hat{y}^i(k)$ will diverge, resulting in necessity for the mean square stability of $\hat{x}_i(k)$.
Hence, the spectral radius being less than one is the necessary and sufficient mean square stability condition for the state $\hat{x}_i(k)$.
Further, we have $\displaystyle\lim_{k\rightarrow \infty}\mathbb{E}\left[||\hat{x}_i(k)||^2\right]=0,\: \forall i=1,2,\hdots,N \Longleftrightarrow \lim_{k\rightarrow \infty}\mathbb{E}\left[||x(k)||^2\right]=0,$ where $x(k)$ is the state for the DNCS defined in \eqref{eqn:5}.
This concludes the proof.
\end{proof}
\begin{remark}
Theorem \ref{theorem:4.1} provides an efficient way to analyze the stability for the large-scale DNCSs. 
The key idea stems from the hypothesis that the stability of each subsystem by partitioning the original system will provide the stability of the entire system.
Without any relaxation or conservatism, theorem \ref{theorem:4.1} proved the necessary and sufficient condition for stability, which is equivalent to \eqref{eqn:10} for the mean square stability of the entire system. Compared to the total number of modes of full state model \eqref{eqn:5}, which is $q^{N(N-1)}$, the reduced mode model \eqref{eqn:12} has total $\sum_{i=1}^{N}q^{\hat{n}_i(\hat{n}_i-1)}$ modes. Consequently, the growth of mode numbers in full state model is exponential with respect to $N^2$, whereas that in reduced mode model is \textbf{linear} with regard to $N$. Therefore, theorem \ref{theorem:4.1} is computationally more efficient. 
\end{remark}

\section{Stability Region and Stability Bound for Uncertain Markov Transition Probability Matrix}
The Markov transition probability matrix can be obtained from data of communication delays. However, the statistics itself contains uncertainties due to the uncertainty in the data. Thus, one cannot estimate the exact transition probability matrix in practice. In this subsection, we assume that the Markov transition probability matrix has uncertainty, i.e. $P^i = \bar{P}^i + \Delta P^i$, where $\bar{P}^i$ is the nominal value and $\Delta P^i$ is the uncertainty in the Markov transition probability matrix for $i^{th}$ subsystem. Due to the variation in $\Delta P^i$, the system stability may change and hence we want to estimate the bound for $\Delta P^i$ to guarantee the system stability. 
Here we assume that $\Delta P^i$ has the following structure:
\begin{align}
\Delta P^i \triangleq \begin{bmatrix}
\Delta p^i_{11} & \Delta p^i_{12} & \cdots & \Delta p^i_{1m_i}\\
\Delta p^i_{21} & \Delta p^i_{22} & \cdots & \Delta p^i_{2m_i}\\
\vdots & \vdots & \ddots & \vdots\\
\Delta p^i_{m_i1} & \Delta p^i_{m_i2} & \cdots & \Delta p^i_{m_im_i}
\end{bmatrix}, \in\mathbb{R}^{m_i\times m_i}\nonumber\\
\text{ s.t. }\sum_{s=1}^{m_i}\Delta p^i_{rs} = 0, \forall r=1,2,\hdots,m_i\label{eqn:11-1}
\end{align}

Since we have a constraint such that the row sum has to be zero for $\Delta P^i$ in above equation, we aim to find the feasible maximum bound for each row, $\varepsilon^{i}_{r}$, satisfying the inequality $|\Delta p^i_{rs}| \leq \varepsilon^i_{r},\:\forall r=1,2,\hdots,m_i$, to guarantee the system stability. Then, $\varepsilon^i_r$, $\forall r=1,2,\hdots,m_i$ can be obtained by the following two steps.\\

\noindent\textbf{Step 1: Solve via Linear Programming (LP)}\\[-0.2in]
\begin{eqnarray}
&\text{maximize}&\mathbf{1}^{\top}z\qquad\text{(for upper bound)}\label{eqn:14}\\
\Big(\text{or} &\text{minimize}&\mathbf{1}^{\top}z\qquad\text{(for lower bound)}\Big)\nonumber\\
&\text{subject to}&\mathbb{A}|z| < b_s,\quad \forall s=1,2,\hdots, m_i\label{eqn:15}\\
&& lb_s \leq z_s \leq ub_s, \forall s=1,2,\hdots, m_i\label{eqn:16}
\end{eqnarray}
where
\begin{align*}
&z_s \triangleq [\Delta p^i_{1s},\: \Delta p^i_{2s},\:\cdots,\: \Delta p^i_{m_is}]^{\top},&\qquad\qquad\qquad\qquad\qquad\qquad\qquad\\
&z \triangleq [z_1^{\top},\: z_2^{\top},\: \cdots, \: z_{m_i}^{\top}]^{\top},
\end{align*}
\begin{align*}
&\mathbb{A} \triangleq \Big[
\alpha_1,\:\alpha_2,\:\cdots,\:\alpha_{m_i}
\Big],\text{ with }
\alpha_j \triangleq \parallel \hat{W}^i_j \otimes \hat{W}^i_j \parallel_{\infty},\: j=1,2,\hdots,m_i,\\
&b_s \triangleq 1 -\sum_{r=1}^{m_i}\alpha_r\bar{p}_{rs}^{i},\\
&lb_s \triangleq [-\bar{p}^i_{1s}, \: -\bar{p}^i_{2s}, \: \cdots \: -\bar{p}^i_{m_is}]^{\top},\\
&ub_s \triangleq [1-\bar{p}^i_{1s}, \: 1-\bar{p}^i_{2s}, \: \cdots \: 1-\bar{p}^i_{m_is}]^{\top}.
\end{align*}

The inequality constraint \eqref{eqn:15} in the LP problem guarantees the mean square stability according to the Lemma \ref{lemma:4.1} and Theorem \ref{theorem:4.2}. The term $lb_s$ and $ub_s$ in \eqref{eqn:16} are the lower and upper bounds for $z_s$, according to $0 \leq p^i_{rs}=(\bar{p}^i_{rs} + \Delta p^i_{rs}) \leq 1$.\\
\begin{figure}[h]
\centering
\includegraphics[scale=0.75]{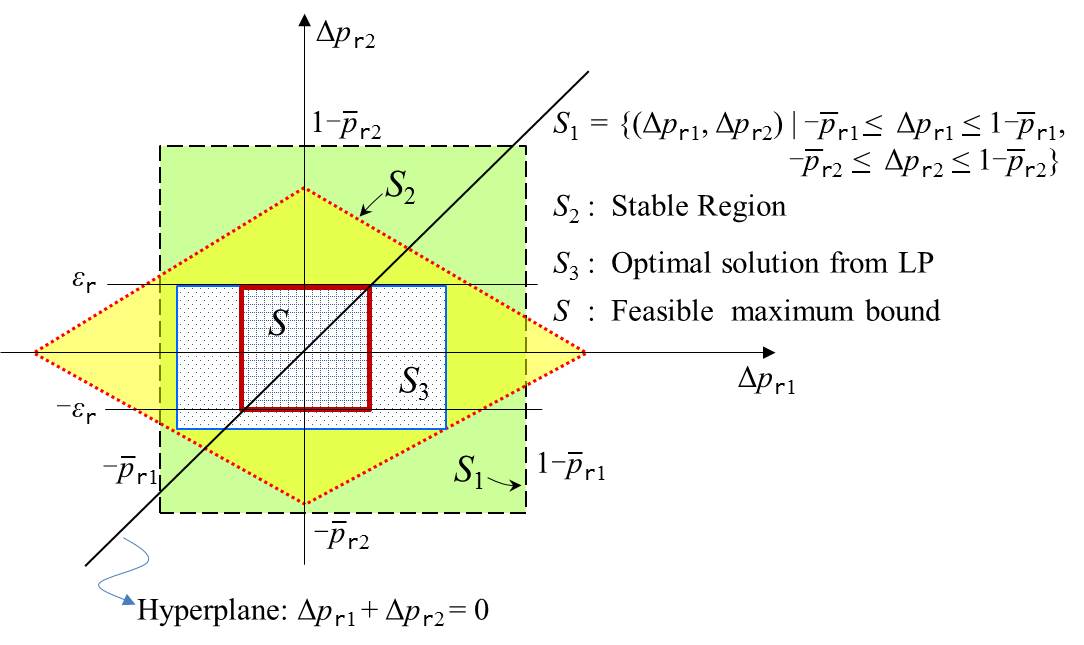}
\caption{The geometry of the Stability Region Analysis for the uncertain Markov transition probability matrix when $m_i=2$. Each region is described in the figure.}\label{fig:2}
\end{figure}

\noindent\textbf{Step 2: Obtain Feasible Solution with Hyperplane Constraint}\\
We can compute the feasible maximum bound for $\Delta p^i_{rs}$ as follows.
\begin{align}
&\varepsilon_r^i = \min\left(\min(|\varepsilon_{r,\text{lb}}^{i}|),\min(\:|\varepsilon_{r,\text{ub}}^i|)\right),\quad r = 1,2,\hdots,m_i.\label{eqn:17}
\end{align}
where 
$\varepsilon_{r,\text{lb}}^i \triangleq [(\Delta p^i_{r1})^{\star}_{\text{lb}},\:(\Delta p^i_{r2})^{\star}_{\text{lb}},\:\hdots,\: (\Delta p^i_{rm})^{\star}_{\text{lb}}]^{\top}$, $\varepsilon_{r,\text{ub}}^i \triangleq [(\Delta p^i_{r1})^{\star}_{\text{ub}},\:(\Delta p^i_{r2})^{\star}_{\text{ub}},$ $\hdots,\: (\Delta p^i_{rm})^{\star}_{\text{ub}}]^{\top}$, and $(\Delta p_{rs}^i)^{\star}_{lb}$, $(\Delta p_{rs}^i)^{\star}_{ub}$ denote optimal lower and upper bounds for $\Delta p_{rs}^i$, obtained from the LP, respectively.

Since upper or lower bounds are solved by maximizing or minimizing the objective function, $(\Delta p^i_{rs})^{\star}$ has different values for upper and lower bounds. 
Fig.\ref{fig:2} shows the geometry of stability region analysis for uncertain transition probability matrix. The region $S_1$ stands for the bounds that come from $-\bar{p}^i_{rs} \leq \Delta p^i_{rs} \leq 1-\bar{p}^i_{rs}$. $S_2$ can be obtained from inequality constraint \eqref{eqn:15}. The region $S_3$ denotes the solution from the LP and $S$ is the feasible maximum upper and lower bounds with a stability guarantee. 
Note that $\Delta P^i$ satisfies $\sum_{s=1}^{m_i}\Delta p^i_{rs} = 0, \: \forall r=1,2,\hdots,m_i$ and hence feasible solutions should lie on the hyperplane, satisfying $\Delta p^i_{r1} + \Delta p^i_{r2} + \hdots + \Delta p^i_{rm_i} = 0$, $\forall r=1,2,\hdots,m_i$. Therefore, we can compute the feasible maximum bound from \eqref{eqn:17} for each row $r$.

Now we prove that inequality constraint \eqref{eqn:15} guarantees the system stability.
\begin{lemma}\label{lemma:4.1}
Consider a block matrix $X$ defined by
\begin{align*}
X = \begin{bmatrix}
X_{11} & X_{12} & \cdots & X_{1m}\\
X_{21} & X_{22} & \cdots & X_{2m}\\
\vdots & \vdots& \vdots & \vdots\\
X_{m1} & X_{m2} & \cdots & X_{mm}
\end{bmatrix},
\end{align*}
where matrix $X_{ij}\in\mathbb{R}^{n\times n}$.
Then, we have $\rho\left(X\right) < 1$, if $\displaystyle\sum_{j=1}^{m}\left\lVert X_{ij}\right\rVert_{\infty} < 1,\:\forall i=1,2,\hdots,m.$
\end{lemma}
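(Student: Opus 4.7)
The plan is to bound the spectral radius by the induced infinity norm of the full (unblocked) matrix $X$, since for any induced matrix norm $\|\cdot\|$ we have the standard inequality $\rho(X) \le \|X\|$. Specifically, I will use $\rho(X) \le \|X\|_\infty$, where $\|X\|_\infty$ denotes the maximum absolute row sum of $X$ viewed as an $mn \times mn$ matrix over $\mathbb{R}$.

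The main step is to relate $\|X\|_\infty$ to the block-row quantities $\sum_{j=1}^m \|X_{ij}\|_\infty$ appearing in the hypothesis. Each scalar row index $k \in \{1,\dots,mn\}$ lies in a unique block-row $i$ (with $k = (i-1)n + k'$ for some $k' \in \{1,\dots,n\}$), and the corresponding row of $X$ is the horizontal concatenation of row $k'$ of the blocks $X_{i1},X_{i2},\dots,X_{im}$. Hence the absolute row sum of row $k$ of $X$ equals $\sum_{j=1}^m (\text{absolute row sum of row } k' \text{ of } X_{ij})$, and each term is bounded by $\|X_{ij}\|_\infty$ since $\|X_{ij}\|_\infty$ is by definition the maximum absolute row sum of $X_{ij}$. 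Taking the maximum over $k$ (equivalently, over $i$ and then over $k'$) yields $\|X\|_\infty \le \max_{i} \sum_{j=1}^m \|X_{ij}\|_\infty$.

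Combining these two observations gives $\rho(X) \le \|X\|_\infty \le \max_{i} \sum_{j=1}^m \|X_{ij}\|_\infty < 1$, which is exactly the claim. There is essentially no hard step here: the only thing to get right is the bookkeeping between scalar rows of $X$ and rows of the individual blocks $X_{ij}$, so that the block-row hypothesis translates correctly into an upper bound on the scalar infinity norm of $X$. No eigenvalue estimates beyond the elementary $\rho \le \|\cdot\|$ inequality are needed.
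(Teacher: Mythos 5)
Your proof is correct and follows exactly the same route as the paper's: bound $\rho(X)$ by $\|X\|_\infty$ and then bound $\|X\|_\infty$ by $\max_i \sum_{j=1}^m \|X_{ij}\|_\infty$. The only difference is that you actually spell out the row-index bookkeeping behind the block-norm inequality, which the paper simply asserts.
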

\begin{proof}
For the block matrix $X$ given above, the following inequality condition $\parallel X \parallel_{\infty} \leq \max_i \sum_{j=1}^{m}$ $\parallel X_{ij} \parallel_{\infty}$ holds. Also, it is well known that $\rho(X) \leq$ $\parallel X \parallel_p$ for any choice of $p$. 

Therefore, we conclude that $\sum_{j=1}^{m}\parallel X_{ij} \parallel_{\infty} < 1,\:\forall i=1,2,\hdots,m \Rightarrow \rho(X) \leq \parallel X \parallel_{\infty} < 1$.
\end{proof}
\vspace{0.1in}

\begin{theorem}\label{theorem:4.2}
Consider the MJLS \eqref{eqn:5} for the large-scale DNCS with communication delays. Then, \eqref{eqn:5} is mean square stable if
\begin{eqnarray*}
&\displaystyle\sum_{r=1}^{m_i}\alpha_r |\Delta p^i_{rs}| < \beta_s,\quad\begin{matrix}
\forall s=1,2,\hdots,m_i,\\
\forall i=1,2,\hdots,N
\end{matrix}
\end{eqnarray*}
where $\alpha_r = \parallel \hat{W}^i_r \otimes \hat{W}^i_r\parallel_{\infty}$ and $\displaystyle \beta_s = 1 - \sum_{r=1}^{m_i}\bar{p}^i_{rs}\parallel \hat{W}^i_r \otimes \hat{W}^i_r\parallel_{\infty}$, 
is satisfied.
\end{theorem}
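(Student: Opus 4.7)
The plan is to reduce the claim to a bound on the block row sums of the matrix appearing in Theorem \ref{theorem:4.1}, so that Lemma \ref{lemma:4.1} can be invoked directly. By Theorem \ref{theorem:4.1}, mean square stability of the DNCS is equivalent to $\rho(M^i) < 1$ for every $i$, where $M^i \triangleq ({P^i}^{\top}\otimes I)\textnormal{diag}(\hat{W}^i_j\otimes \hat{W}^i_j)$. From the explicit block form worked out in the proof of Theorem \ref{theorem:4.1}, the $(s,r)$ block of $M^i$ is exactly $p^i_{rs}\,(\hat{W}^i_r\otimes \hat{W}^i_r)$, where $p^i_{rs} = \bar{p}^i_{rs} + \Delta p^i_{rs}$.

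The first step I would take is to apply Lemma \ref{lemma:4.1} to $M^i$: it suffices to show that, for each fixed $s$,
\begin{equation*}
\sum_{r=1}^{m_i}\bigl\lVert p^i_{rs}\,(\hat{W}^i_r\otimes \hat{W}^i_r)\bigr\rVert_\infty \;<\; 1.
\end{equation*}
Pulling the scalar out of the norm gives $\sum_{r=1}^{m_i} |p^i_{rs}|\,\alpha_r$, using the notation $\alpha_r = \lVert \hat{W}^i_r \otimes \hat{W}^i_r\rVert_\infty$ from the theorem statement.

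Next I would control $|p^i_{rs}|$. Since $\bar{p}^i_{rs} \geq 0$ (entries of a stochastic matrix), the triangle inequality yields $|p^i_{rs}| = |\bar{p}^i_{rs} + \Delta p^i_{rs}| \leq \bar{p}^i_{rs} + |\Delta p^i_{rs}|$. Substituting and splitting the sum gives
\begin{equation*}
\sum_{r=1}^{m_i} \alpha_r\,|p^i_{rs}| \;\leq\; \sum_{r=1}^{m_i}\alpha_r \bar{p}^i_{rs} + \sum_{r=1}^{m_i}\alpha_r|\Delta p^i_{rs}|.
\end{equation*}
The hypothesis of the theorem is precisely $\sum_{r}\alpha_r|\Delta p^i_{rs}| < \beta_s = 1 - \sum_{r}\alpha_r \bar{p}^i_{rs}$, so the right-hand side above is strictly less than $\sum_r \alpha_r \bar{p}^i_{rs} + \bigl(1 - \sum_r \alpha_r \bar{p}^i_{rs}\bigr) = 1$.

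The block row sum condition of Lemma \ref{lemma:4.1} is therefore satisfied for every $s$, whence $\rho(M^i) < 1$ for each $i$, and Theorem \ref{theorem:4.1} delivers mean square stability of the DNCS. I do not anticipate any serious obstacle: the only point where care is needed is the triangle inequality step, which implicitly uses nonnegativity of $\bar{p}^i_{rs}$ to avoid absolute value on the nominal part; otherwise the argument is a direct chaining of Lemma \ref{lemma:4.1} with the block structure already extracted in the proof of Theorem \ref{theorem:4.1}.
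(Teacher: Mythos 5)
Your proof is correct and rests on the same two ingredients as the paper's: the block row-sum criterion of Lemma \ref{lemma:4.1} applied to $M^i = ({P^i}^{\top}\otimes I)\textnormal{diag}(\hat{W}^i_j\otimes \hat{W}^i_j)$, and the decomposition $p^i_{rs} = \bar{p}^i_{rs} + \Delta p^i_{rs}$. The one organizational difference is where the triangle inequality is applied: you apply it at the scalar level, $|p^i_{rs}| \leq \bar{p}^i_{rs} + |\Delta p^i_{rs}|$, inside a single block row sum of the full perturbed matrix, whereas the paper first splits $M^i$ into its nominal and perturbation parts and bounds $\rho(M^i) \leq \lVert M^i_{\text{nom}}\rVert_{\infty} + \lVert \Delta M^i\rVert_{\infty}$ before estimating each norm by its worst block row. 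Your ordering is actually the tighter (and cleaner) one: the paper's sum of two infinity norms takes the maximum over $s$ separately in each term, which in principle could exceed $1$ even when the per-$s$ condition of the theorem holds, whereas your single application of Lemma \ref{lemma:4.1} to $M^i$ uses the hypothesis exactly as stated for each fixed $s$ and closes that small gap.
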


\begin{proof}
If the Markov transition probability matrix for the system in \eqref{eqn:12} has the uncertainty denoted by $P^i = \bar{P}^i + \Delta P^i$, then 
the term $\rho\Big(({P^i}^{\top}\otimes I)$ $\text{\textnormal{diag}}(\hat{W}_{j}^{i}\otimes \hat{W}_{j}^{i})\Big)$ in \eqref{eqn:13} can be expressed as
\begin{align}
&\rho\Big(({P^i}^{\top}\otimes I)\text{\textnormal{diag}}(\hat{W}_{j}^{i}\otimes \hat{W}_{j}^{i})\Big)\nonumber\\
= &\rho\Big(\big({(\bar{P}^i + \Delta P^i)}^{\top}\otimes I\big)\text{\textnormal{diag}}(\hat{W}_{j}^{i}\otimes \hat{W}_{j}^{i})\Big)\nonumber\\
= &\rho\bigg(\Big(({\bar{P^i}}^{\top}\otimes I) + (\Delta {P^i}^{\top}\otimes I)\Big)\text{\textnormal{diag}}(\hat{W}_{j}^{i}\otimes \hat{W}_{j}^{i})\bigg)\nonumber\\
= &\rho\bigg(({\bar{P^i}}^{\top}\otimes I)\text{\textnormal{diag}}(\hat{W}_{j}^{i}\otimes \hat{W}_{j}^{i}) + (\Delta {P^i}^{\top}\otimes I)\text{\textnormal{diag}}(\hat{W}_{j}^{i}\otimes \hat{W}_{j}^{i})\bigg)\nonumber\\
\leq &\parallel({\bar{P^i}}^{\top}\otimes I)\text{\textnormal{diag}}(\hat{W}_{j}^{i}\otimes \hat{W}_{j}^{i}) + (\Delta {P^i}^{\top}\otimes I)\text{\textnormal{diag}}(\hat{W}_{j}^{i}\otimes \hat{W}_{j}^{i})\parallel_{\infty}\nonumber\\
\leq &\parallel({\bar{P^i}}^{\top}\otimes I)\text{\textnormal{diag}}(\hat{W}_{j}^{i}\otimes \hat{W}_{j}^{i})\parallel_{\infty} + \parallel(\Delta {P^i}^{\top}\otimes I)\text{\textnormal{diag}}(\hat{W}_{j}^{i}\otimes \hat{W}_{j}^{i})\parallel_{\infty},\label{eqn:15-1}
\end{align}
In the first inequality, we used the fact that $\rho(\cdot) \leq \parallel\cdot\parallel_{\infty}$ and the sub-multiplicative property was applied in the last inequality. The block matrix structure for each term of the last inequality is alternatively expressed as follows:

{\small
\begin{align*}
&\left\lVert(\bar{P^i}^{\top}\otimes I)\text{\textnormal{diag}}(\hat{W}_{j}^{i}\otimes \hat{W}_{j}^{i})\right\rVert_{\infty}\\
&=\left\lVert
\underbrace{\begin{bmatrix}
\bar{p}^i_{11}I & \bar{p}^i_{21}I & \cdots & \bar{p}^i_{m_i1}I\\
\bar{p}^i_{12}I & \bar{p}^i_{22}I & \cdots & \bar{p}^i_{m_i2}I\\
\vdots & \vdots & \cdots & \vdots \\
\bar{p}^i_{1m_i}I & \bar{p}^i_{2m_i}I & \cdots & \bar{p}^i_{m_im_i}I\\
\end{bmatrix}}_{=(\bar{P^i}^{\top}\otimes I)}
\underbrace{\begin{bmatrix}
\hat{W}^i_1\otimes \hat{W}^i_1 & 0 & \cdots & 0\\
0 & \hat{W}^i_2\otimes \hat{W}^i_2 &  & 0\\
\vdots & & \ddots & \\
0 & 0 & &\hat{W}^i_{m_i}\otimes \hat{W}^i_{m_i}
\end{bmatrix}}_{=\text{\textnormal{diag}}(\hat{W}_{j}^{i}\otimes \hat{W}_{j}^{i})}
\right\rVert_{\infty}\\
\\
&=\left\lVert \begin{matrix}
\gamma_1\bar{p}^i_{11} & \gamma_2\bar{p}^i_{21} & \cdots & \gamma_{m_i}\bar{p}^i_{m_i1}\\
\gamma_1\bar{p}^i_{12} & \gamma_2\bar{p}^i_{22} & \cdots & \gamma_{m_i}\bar{p}^i_{m_i2}\\
\vdots & \vdots & \ddots & \vdots\\
\gamma_1\bar{p}^i_{1m_i} & \gamma_2\bar{p}^i_{2m_i} & \cdots & \gamma_{m_i}\bar{p}^i_{m_im_i}\\
\end{matrix} \right\rVert_{\infty},
\end{align*}}
where $\gamma_j = (\hat{W}_j^i \otimes \hat{W}_j^i)$, $j=1,2,\hdots,m_i$, and similarly,
\begin{align*}
&\left\lVert(\Delta {P^i}^{\top}\otimes I)\text{\textnormal{diag}}(\hat{W}_{j}^{i}\otimes \hat{W}_{j}^{i})\right\rVert_{\infty} =
\left\lVert \begin{matrix}
\gamma_1\Delta p^i_{11} & \gamma_2\Delta p^i_{21} & \cdots & \gamma_{m_i}\Delta p^i_{m_i1}\\
\gamma_1\Delta p^i_{12} & \gamma_2\Delta p^i_{22} & \cdots & \gamma_{m_i}\Delta p^i_{m_i2}\\
\vdots & \vdots & \ddots & \vdots\\
\gamma_1\Delta p^i_{1m_i} & \gamma_2\Delta p^i_{2m_i} & \cdots & \gamma_{m_i}\Delta p^i_{m_im_i}\\
\end{matrix} \right\rVert_{\infty}.
\end{align*}

By applying the result in Lemma \ref{lemma:4.1} into \eqref{eqn:15-1}, it is guaranteed that $\rho\Big(({P^i}^{\top}\otimes I)\text{\textnormal{diag}}(\hat{W}_{j}^{i}\otimes \hat{W}_{j}^{i})\Big) < 1$, 
if the following condition
\begin{align*}
\displaystyle\sum_{r=1}^{m_i}\alpha_r|\Delta p^i_{rs}| + \sum_{r=1}^{m_i}\alpha_r\bar{p}^i_{rs} < 1,\quad \begin{matrix}
\forall s=1,2,\hdots,m_i,\\
\end{matrix}
\end{align*}
where $\alpha_r\triangleq ||\hat{W}_r^i\otimes \hat{W}_r^i||_{\infty}$, is satisfied.

Therefore, \eqref{eqn:5} is mean square stable by Theorem \ref{theorem:4.1} if it is guaranteed that
\begin{align*}
\displaystyle\sum_{r=1}^{m_i}\alpha_r|\Delta p^i_{rs}|  < \beta_s,\quad \begin{matrix}
\forall s=1,2,\hdots,m_i,\\
\forall i=1,2,\hdots,N,
\end{matrix}
\end{align*}
where $\beta_s \triangleq 1 - \sum_{r=1}^{m_i}\alpha_r\bar{p}^i_{rs}$.
\end{proof}

\section{Examples}
\subsection{Stability Analysis for $N$ Inverted Pendulum System}
Consider $N$ inverted pendulum system, which are physically interconnected by linear springs \cite{wang2008event}. The discrete-time subsystem dynamics with communication delays is modeled by
\begin{align*}
&\quad\qquad x_i(k+1) = A_ix_i(k) +
B_iu_i(k) + 
\sum_{\substack{j\in\mathcal{N}_i \\ j\neq i}}
H_{ij}x_j(k^*),
\end{align*}
with subsystem matrices:
\begin{align*}
&A_i = \begin{bmatrix}
1 & \Delta t\\
\left(\frac{g}{l} \textbf{--} \frac{a_iK}{\check{m}l^2}\right)\Delta t & 1
\end{bmatrix},\:
B_i = 
\begin{bmatrix}
0\\
\frac{\Delta t}{\check{m}l^2}
\end{bmatrix},\:
H_{ij} = 
\begin{bmatrix}
0 & 0\\
\frac{h_{ij}K}{\check{m}l^2}\Delta t & 0
\end{bmatrix},
\end{align*}
where $k$ denotes the discrete-time index and $x_i = (x_{i_1},\:x_{i_2})^{\top}\in\mathbb{R}^2$. 
The communication delay is described by the term $k^* = k - \tau$ with the discrete value $\tau$. The meaning of each parameter and its value are given in Table \ref{table_inverted_pendulum}. 

For this system, we consider a state feedback law given by $u_i(k) = K_ix_i(k)$, where $K_i\triangleq \displaystyle\begin{bmatrix}
a_iK-\frac{\check{m}l^2}{4}(8+\frac{4g}{l}), & -3\check{m}l^2\end{bmatrix}$
for the control input $u_i(k^*)$.

\begin{table}[h]
\begin{center}
\caption{Nomenclature for $N$ Inverted Pendulum Dynamics}\label{table_inverted_pendulum}
{\small
  \begin{tabular}{|c|c|c|}\hline
  Definition & Symbol & Value\\\hline
	\begin{tabular}{@{}c@{}}Number of Springs connected to \\$i^{th}$ Pendulum\end{tabular} & $a_i$ & 
	\begin{tabular}{@{}c@{}}1, \:$i=1,N$\\\:2, otherwise\end{tabular}\\
	Interaction term with neighbours & $h_{ij}$ &  0.04,\:\:$\forall j\in\mathcal{N}_i$\\
	Gravity & $g$ & 9.8\\
	Spring Constant & $K$ & 5\\
	Pendulum Mass & $\check{m}$ & 0.5\\
	Pendulum Length & $l$ & 1\\
	Sampling time for discrete-time dynamics & $\Delta t$ & 0.1
  \\\hline
  \end{tabular}
 }
\end{center}
\end{table}

We can rewrite the closed-loop dynamics for this $N$ inverted pendulum system as follows:
\begin{align*}
x_i(k+1) = \sum_{j\in\mathcal{N}_i}A_{ij}x_j(k^*), \text{ where }A_{ij} \triangleq \begin{dcases}
A_{i}+B_iK_i, \text{ if } j=i,\\H_{ij}, \text{ otherwise}.
\end{dcases}
\end{align*}

If there is no communication delay (i.e., $k^*=k$), the dynamics for the entire DNCS is given by \eqref{eqn:4}, where the matrix $\mathcal{A}$ of which structure is also given in \eqref{eqn:4} satisfies $\rho(\mathcal{A}) = 0.9525 < 1$. Therefore, we can assure that the $N$ inverted pendulum system with no communication delays is stable.

Next, we test the stability for this system with random communication delays.
We assume that the communication delay $\tau$ is bounded by $0 \leq \tau \leq \tau_{d} = 1$, i.e., $k^* = \{k,\:k-1\}$, $\forall i=1,2,\hdots,N$. Also, we assume that every communication delays are governed by the Markov process with an initial probability distribution $\pi(0)$ and the Markov transition probability matrix $P$ given by
\begin{eqnarray}
\pi(0) = [1,\:0], \quad
P = \begin{bmatrix}
0.5 & 0.5\\
0.3 & 0.7
\end{bmatrix}.\label{eqn:18}
\end{eqnarray}

For this system, even with $N=100$, the full state model \eqref{eqn:5} has total $q^{N(N-1)} = 2^{100\times 99}$ modes. 
Since this inverted pendulum system has only interconnected terms with neighbors when $j= i\pm 1$, otherwise we have $A_{ij}=0$.
Based on this fact and by excluding these cases (i.e., where $A_{ij}=0$), we can further reduce the mode number to $q^{2(N-1)}=2^{2\times 99}$, which is still large. It is computationally intractable to deal with $2^{2\times 99}$ numbers of matrices to analyze system stability. However, in contrast, the reduce mode model \eqref{eqn:12} has total $\sum_{i=1}^{N}q^{\hat{n}_i(\hat{n}_i-1)} = 98\times(2^{3\times 2}) + 2\times (2^{2\times 1}) = 6280$ modes. 
Furthermore, the proposed method fully maximizes its own advantage to reduce the mode numbers by considering the symmetric property between agents, which cannot be implemented on the full state model.
Since subsystems are symmetric for $\forall i=2,3,\hdots,N-1$ and for $\forall i=1,N$, we only need to check the stability condition for these two cases. Taking into account the interconnection link (i.e., the case where $A_{ij}\neq 0$), the symmetric structure results in total $2^{2\times (3-1)}+2^{2\times (2-1)} = 20$ modes, which is drastically reduced when compared to $2^{2\times 99}$ numbers of modes.

\begin{figure}[!ht]
\centering
\includegraphics[scale=0.85]{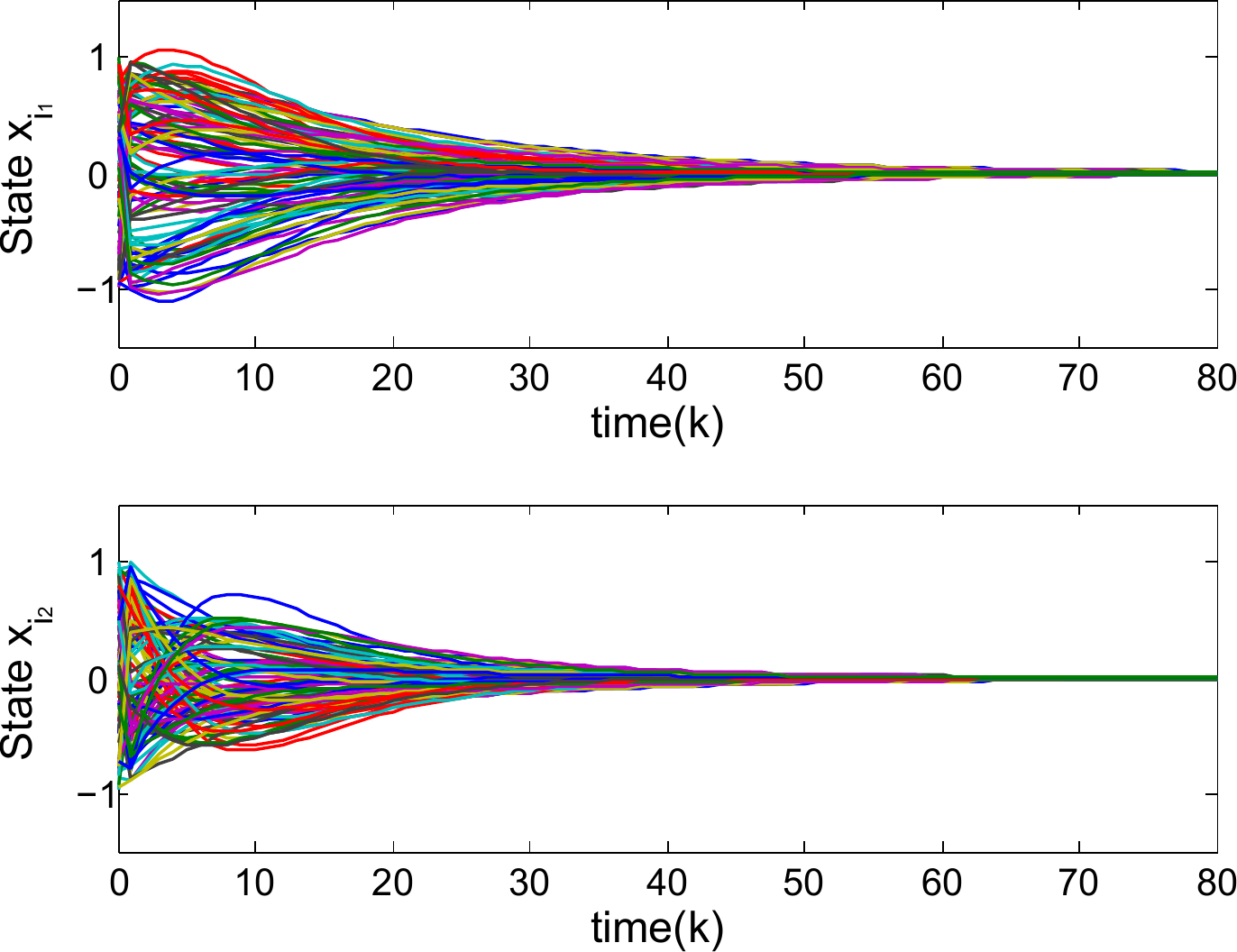}
\caption{State trajectories of each agent for the $N$ inverted pendulum system with the Markovian communication delays. Initial conditions are randomly generated for all states $x_i$, $\forall i=1,2,\hdots,100$.}\label{fig:3}
\end{figure}

The spectral radius for $i=2,3,\hdots,99$ is computed by $\rho\Big( ({P^i}^{\top}\otimes I)$ $\text{\textnormal{diag}}(\hat{W}_j^i\otimes \hat{W}_j^i)\Big)=0.8864 < 1$, where $P^i=(P\otimes P\otimes P\otimes P)$. For $i=1$ and $N$, we have $\rho\Big( ({P^i}^{\top}\otimes I)\text{\textnormal{diag}}(\hat{W}_j^i\otimes \hat{W}_j^i)\Big)=0.8682 < 1$, where $P^i = (P\otimes P)$. Consequently, the $N$ inverted pendulum system is stable in the mean square sense according to Theorem \ref{theorem:4.1}. The state trajectory plot also supports this result, as shown in Fig. \ref{fig:3}.
For this simulation, initial condition was assumed to be uniformly distributed in $[-1,1]$, and was generated by manipulating the MATLAB$^{{\tiny \mycirc{R}}}$ command \texttt{rand(...)} that generates uniformly distributed pseudo random numbers between $0$ and $1$.

\subsection{Stability Bound for Uncertain Markov transition probability matrix}
In order to solve the LP to estimate the bound for uncertain Markov transition probability matrix, we used MATLAB$^{{\tiny \mycirc{R}}}$ with CVX\cite{grant2008cvx}, a Matlab-based software for convex optimization.
\subsubsection{Scalar system}
Although the proposed method to estimate maximum bound for uncertain Markov transition probability matrix is developed for the large-scale DNCS, it is also applicable to general MJLS. We adopted a following example, introduced in \cite{karan2006transition} to compare the performance of maximum bound estimation.

Consider the following MJLS that has two modes with scalar discrete-time dynamics. 
\begin{align*}
&x(k+1) = a_{\sigma(k)}x(k),\quad\sigma(k)\in\{1,2\},\\
&a_1 = 1/2,\quad a_2=5/4.
\end{align*}
The Markov transition probability matrix has the form of $P=\bar{P}+\Delta P$, where
\begin{align*}
\bar{P} = \begin{bmatrix}
0.4 & 0.6\\
0.5 & 0.5
\end{bmatrix},\:
\Delta{P} = \begin{bmatrix}
\Delta p_{11} & \Delta p_{12}\\
\Delta p_{21} & \Delta p_{22}
\end{bmatrix},\:\sum_{j=1}^{2}\Delta p_{ij} = 0,\quad\forall i=1,2.
\end{align*}

After applying the two steps proposed in this paper, we obtained the maximum bound $\varepsilon_1=0.4,\:\varepsilon_2=0.02$ whereas \cite{karan2006transition} gives the value as $\varepsilon_1=\varepsilon_2=0.021$, which is more conservative. For stability check, among all possible scenarios with $|\Delta p_{rs}| \leq \varepsilon_r$, $\forall r,s=1,2$, we have $\max\rho\Big((P^{\top}\otimes I)\textnormal{diag}(a_j\otimes a_j)\Big) = 1$, which is a marginal value for stability. Hence, the system is stable with obtained maximum bound that is more relaxed than \cite{karan2006transition}.

\subsubsection{The $N$ Inverted Pendulum System}
Recalling the $N$ inverted pendulum system, we assume that the Markov transition probability matrix $P^i $ has uncertainty $\Delta P^i$ that satisfies $\Delta P^i = P^i - \bar{P}^i$. The nominal matrix $\bar{P}^i$ is given by $\bar{P}^i = (\bar{P}\otimes \bar{P})$ for $i=1,N$ and $\bar{P}^i = (\bar{P}\otimes \bar{P}\otimes \bar{P} \otimes \bar{P})$ for $i=2,3,\hdots,N-1$, where $\bar{P}$ has a same structure with the transition probability matrix given in \eqref{eqn:18}.

The feasible solution with the LP provides the maximum bound $\varepsilon^i = [\varepsilon_1^i,\:\varepsilon_2^i,\:\hdots\varepsilon_{16}^i]=10^{-3}\times [4.9,\:0.9,\:0.9,$ $\:0.8,\:0.9,\:0.8,\:0.8,\:6.9,\:0.9,\:0.8,\:0.8,$ $\:6.9,\:0.8,\:6.9,\:6.9,\:13.5]$, $\forall i=2,3,\hdots,N-1$.
For $i=1$ and $N$, we obtained $\varepsilon^i = [\varepsilon_1^i,\:\varepsilon_2^i,\:\varepsilon_3^i,\:\varepsilon_4^i] = [0.01,\:0.01,\:0.01,\:0.03]$.
Therefore, we can assure that $N$ inverted pendulum system is mean square stable if the uncertainty in the Markov transition probability matrix is within the bound such that $|\Delta p^i_{rs}| \leq \varepsilon^i_r$, $\forall r$ and $\forall i=1,2,\hdots,N$.

\section{Conclusions}
This paper studied the mean square stability of the large-scale DNCSs. Since the number of modes in such systems is extremely large, current stability analysis tools are intractable. To avoid this scalability problem, we provided a new analysis framework, which incorporates a reduced mode model that scales linearly with respect to the number of subsystems. Additionally, we presented a new method to estimate bounds for uncertain Markov transition probability matrix for which system stability is guaranteed. We showed that this method is less conservative than those proposed in the literature. The validity of the proposed methods were verified using an example based on interconnected inverted pendulums.

\section{Acknowledgements}
This research was supported by the National Science Foundation award \#1349100, with Dr. Almadena Y. Chtchelkanova as the program manager.\\

\footnotesize
\bibliographystyle{unsrt}
\bibliography{DNCS2014}

\end{document}